\documentclass[10pt,conference]{IEEEtran}
%
\IEEEoverridecommandlockouts

\usepackage{graphicx}
\usepackage{amssymb}
\usepackage{amsmath}
\usepackage{cite}
\usepackage{subfigure}
\usepackage{mathrsfs}
\usepackage[displaymath,mathlines]{lineno}
\usepackage{color}
\usepackage{tabulary}
\usepackage{multirow}
\usepackage{algpseudocode}
\usepackage{algorithm,algpseudocode}
\usepackage{algorithmicx}
\usepackage{pbox}
\usepackage{multicol}
\usepackage{lipsum}
\usepackage{amsthm}
\usepackage{relsize}
\usepackage{lipsum}
\usepackage{epstopdf}
\usepackage{mathtools}
\usepackage{calc}
\usepackage{tabularx}

\makeatletter
\newcommand{\doublewidetilde}[1]{{%
		\mathpalette\double@widetilde{#1}}}
\newcommand{\double@widetilde}[2]{%
	\sbox\z@{$\m@th#1\widetilde{#2}$}%
	\ht\z@=.5\ht\z@
	\widetilde{\box\z@}}
\makeatother

\setlength{\arraycolsep}{0.07 cm}

\newtheorem{lemma}{Lemma}



\setcounter{page}{1}

\begin{document}	
	\title{Learning to Perform Downlink  Channel Estimation in Massive MIMO Systems}
	\author{\IEEEauthorblockN{Amin Ghazanfari\IEEEauthorrefmark{1}, Trinh Van Chien\IEEEauthorrefmark{2}, Emil Bj\"ornson\IEEEauthorrefmark{1}\IEEEauthorrefmark{3}, Erik G. Larsson\IEEEauthorrefmark{1}}
		\IEEEauthorblockA{\IEEEauthorrefmark{1}Department of Electrical Engineering (ISY), Link\"oping University, Sweden }
		\IEEEauthorblockA{\IEEEauthorrefmark{3}Department of Computer Science, KTH Royal Institute of Technology, Sweden}
		\IEEEauthorblockA{\IEEEauthorrefmark{2}Interdisciplinary Centre for Security, Reliability and Trust (SnT), University of Luxembourg, Luxembourg }
		\vspace*{-.75cm}
		\thanks{This paper was supported by ELLIIT and the Grant 2019-05068 from the Swedish Research Council.}
		}

\maketitle
	\begin{abstract}
We study downlink (DL) channel estimation in a multi-cell Massive multiple-input multiple-output (MIMO) system operating in a time-division duplex. The users must know their effective channel gains to decode their received DL data signals. A common approach is to use the mean value as the estimate, motivated by channel hardening, but this is associated with a substantial performance loss in non-isotropic scattering environments. We propose two novel estimation methods. The first method is model-aided and utilizes asymptotic arguments to identify a connection between the effective channel gain and the average received power during a coherence block. The second  one is a deep-learning-based approach that uses a neural network to identify a mapping between the available information and the effective channel gain.  We compare the proposed methods against other benchmarks in terms of normalized mean-squared error and spectral efficiency (SE). The proposed methods provide substantial improvements, with the learning-based solution being the best of the considered estimators.
	\end{abstract}

	\IEEEpeerreviewmaketitle

	\section{Introduction}
Massive multiple-input multiple-output (MIMO) is one of the backbone technologies for 5G-and-beyond networks \cite{larsson2014massive,Parkvall2017a}. In Massive MIMO, each base station (BS) is equipped with many active antennas to facilitate adaptive beamforming towards individual users and spatial multiplexing of many users \cite{marzetta2010noncooperative}. In this way, the technology can improve the spectral efficiency (SE) for individual users and, particularly, increase the sum SE in highly loaded networks by orders of magnitude compared with conventional cellular technology with passive antennas \cite{bjornson2017massive}. Having accurate channel state information (CSI) is essential in Massive MIMO networks \cite{larsson2014massive}, so that the transmission and reception can be tuned to the user channels, to amplify desired signals and reject interference.
Time-division duplex (TDD) operation is preferable for CSI acquisition because the BSs can then acquire uplink CSI from the uplink pilot transmission and utilize the uplink-downlink channel reciprocity to transform it to downlink (DL) CSI \cite{bjornson2017massive}. In this way, the required pilot resources are proportional to the number of users but independent of the number of BS antennas.

To decode the DL signals coherently, the each user must estimate the effective DL channel gain, i.e., an inner product of the precoding vector and the channel vector. The user only needs to know this scalar, not the individual vectors, but its value varies due to channel fading.  In the prior Massive MIMO literature, the fading variations have been neglected, motivated by the channel hardening effect that dictates that the effective DL channel gain is close to its mean value when there are many antennas \cite{jose2011pilot,yang2013performance,redbook}. More precisely, the receivers use this mean value as their estimate of the effective channel gain.
However, the required number of antennas to observe channel hardening depends strongly on the propagation environment.
With spatially correlated fading, one might need hundreds of antennas to achieve the same hardening level as with ideal independent Rayleigh fading \cite[Fig.~2.7]{bjornson2017massive}.
Estimating the effective DL channel gain using the mean value will result in a significant SE loss when the hardening level is low \cite{ngo2017no}. 
	
Another estimation approach is that the BSs beamform some DL pilots along with data to assist the users in estimating the effective DL channel gains \cite{ngo2013massive}. 
Even though this approach will improve the estimates of the effective DL channel gains, the SE might decrease due to the extra overhead \cite{ngo2017no}. 

A blind estimator of the effective DL channel gain was developed for single-cell Massive MIMO systems in \cite{ngo2017no}.
It uses only the DL data signals for estimation and the enabling factor is that the precoding is selected to make the effective DL channel gains (approximately) positive and real-valued so that only the amplitude must be estimated. However, the method developed in  \cite{ngo2017no} relies on asymptotic arguments that are hardly satisfied in the operational regime of practical systems.
Nevertheless, the method was shown to perform better than the use of DL pilots since the blind channel estimation method does not need any extra pilots. The blind estimation method was generalized in  \cite{pasangi2020blind} to a multi-cell Massive MIMO network with uncorrelated Rayleigh fading channel and maximum ratio (MR) precoding at the BS. 

In this paper, we propose two new blind estimators of the DL channel gain for multi-cell Massive MIMO systems with correlated Rayleigh fading, MR precoding, and generic pilot assignment among the users. The first one is a multi-cell extension of the model-aided method from \cite{ngo2017no}. The second one is deep-learning-based and motivated by the fact that blind estimation builds on identifying a mapping between received data signals and the variable that is to be estimated. 
This coincides with the deep learning methodology of learning mappings between input signals and desired variables based on training data \cite{o2017introduction}. Neural networks have previously been used for developing physical-layer algorithms for interference management \cite{Sun2018a}, power control \cite{sanguinetti2018deep}, channel estimation \cite{demir2019channel}, among others. However, the application considered in this paper is novel.
Deep learning methods are particularly suitable for solving problems where the existing models are inaccurate or intractable for analytic development of algorithms, as is the case for the problem considered in this paper.

	\section{System Model}\label{sec:system-model}
	We consider a multi-cell Massive MIMO system with $L$ cells. Each cell has a BS equipped with $M$ antennas and serves $K$ single-antenna users. We use the conventional block
fading to model the randomness of the wireless channels over time
and frequency \cite[Sec.~2]{redbook}. The size of a coherence interval  is denoted $\tau_c$. The channel between BS~$l$ and user~$k'$ in cell~$l'$, which follows a correlated Rayleigh fading model is 
\begin{equation}
\mathbf{g}_{l'k'}^l \sim \mathcal{CN}\left({\mathbf{0}}, \mathbf{R}_{l'k'}^l \right),
\end{equation}
where ${\mathbf{R}}^{l}_{l'k'} \in \mathbb{C}^{M\times M}$ is the positive semi-definite spatial correlation matrix of the channel and $\beta^{l}_{l'k'}  = \mathrm{tr}\left(\mathbf{R}^{l}_{l'k'}\right)/M$, where $\beta^{l}_{l'k'} \geq 0$ is the corresponding average large-scale fading coefficient among the $M$ antennas.

We focus on the DL data transmission in a network operating with a TDD protocol. A new independent channel realization appears in every coherence interval.
	To enable spatial multiplexing in the DL, the BS must estimate the channels of the intra-cell users in every coherence interval and construct precoding vectors based on them. 
	We assume that this is done via uplink channel estimation at each BS once per coherence block. Each user transmits a pilot sequence from a predefined set of orthogonal pilots. We assume the cells share pilots using a pilot reuse factor of $f \geq 1$, which means the users within each cell have mutually orthogonal pilots, and the same pilot sequences are reused in a fraction $1/f$ of the $L$ cells in the network. To achieve this, we assume there is a set of $\tau_p = fK $ mutually orthogonal pilot sequences, each of length $\tau_p$. The channel estimation phase follows the standard minimum mean square error (MMSE) estimation approach in the literature, and the detailed derivation can be found in \cite[Theorem~3.1]{bjornson2017massive}. The MMSE estimate of $\mathbf{g}_{lk}^l$ is
	\begin{equation}
	\hat{\mathbf{g}}_{lk}^l = \sqrt{\hat{p}_{lk}} \mathbf{R}_{lk}^l \pmb{\Psi}_{lk}^{-1} \tilde{\mathbf{y}}_{lk},
	\end{equation}
	{\color{black}where $\tilde{\mathbf{y}}_{lk}$ is received pilot signal at BS $l$ from user $k$} and $\pmb{\Psi}_{lk} = \sum_{l'\in \mathcal{P}_l} \tau_p  \hat{p}_{l'k} \mathbf{R}_{l'k}^l + \sigma_{\mathrm{UL}}^2 \mathbf{I}_M $ and $\hat{\mathbf{g}}_{lk}^l$ is distributed as $\hat{\mathbf{g}}_{lk}^l \sim \mathcal{CN} \left(\mathbf{0}, \tau_p \hat{p}_{lk} \mathbf{R}_{lk}^l \pmb{\Psi}_{lk}^{-1} \mathbf{R}_{lk}^l \right)$.  Note that $\hat{p}_{l'k'}$ denotes the pilot power used by user $k'$ in cell $l'$ and $\mathcal{P}_{l}$ denotes the set of cells sharing the same subset of $K$ orthogonal pilot sequences as cell $l$. 
	The channel estimation error  $\mathbf{e}_{lk}^l = \mathbf{g}_{lk}^l - \hat{\mathbf{g}}_{lk}^l$ is independently distributed as $\mathbf{e}_{lk}^l \sim \mathcal{CN} \left( \mathbf{0}, \mathbf{C}_{lk}^l \right)$, where $\mathbf{C}_{lk}^l = \mathbf{R}_{lk}^l - \tau_p \hat{p}_{lk} \mathbf{R}_{lk}^l \pmb{\Psi}_{lk}^{-1} \mathbf{R}_{lk}^l$.


We use the remaining $\tau_{c} -\tau_{p}$ symbols per coherence interval for DL data transmission. 
The $n$-th data symbol that BS $l$ sends to user~$k$ in cell~$l$ is denoted $s_{lk}[n]$, where  $n$ is an index from $1$ to $\tau_{c} -\tau_{p}$. The data symbols have zero mean and normalized power: $\mathbb{E} \{ |s_{lk}[n]|^2 \} = 1$. By assing a linear precoding vector $\mathbf{w}_{lk} \in \mathbb{C}^M$ to its user~$k$, the signal that BS~$l$ sends to all users in cell~$l$ is
	\begin{equation}
	\mathbf{x}_{l}[n] = \sum_{k=1}^K \sqrt{\rho_{\rm dl}\eta_{lk}} \mathbf{w}_{lk}s_{lk}[n],
	\end{equation}
	where $\rho_{\rm dl}$ is the maximum DL transmit power and $\eta_{lk} \in [0,1]$ determines power allocation to user~$k$ in cell~$l$. We consider an arbitrary selection $\eta_{l1},\ldots,\eta_{lK}$ in every cell but note that it must be selected such that $\sum_{k=1}^{K} \eta_{lk} \leq 1$ that should hold for $l=1,\ldots,L$.

	If we define the effective DL channel gains as 
	\begin{equation} \label{eq:alphaVal}
	\alpha_{lk}^{l'k'} = \sqrt{\rho_{\rm dl}} \left(\mathbf{g}_{lk}^{l'}\right)^{\rm H} \mathbf{w}_{l'k'},
	\end{equation}
the received signal at user~$k$ in cell~$l$ can be expressed as
	\vspace{-.1in}
	\begin{equation}\label{eq:ReceivedSigv1}
	\begin{aligned}
	y_{lk} [n] &= \sqrt{\eta_{lk}}\alpha_{lk}^{lk} s_{lk}[n] + \sum\limits_{\substack{k'=1, \\k' \neq k}}^K  \sqrt{\eta_{lk'}}\alpha_{lk}^{lk'} s_{lk'}[n]
	\\
	&+ \sum\limits_{\substack{l'=1,\\ l' \neq l}}^L \sum_{k'=1}^K \sqrt{\eta_{l'k'}}\alpha_{lk}^{l'k'} s_{l'k'}[n]  + \tilde{w}_{lk}[n],
	\end{aligned}
	\end{equation}
	where $\tilde{w}_{lk}[n] \sim \mathcal{CN}(0, \sigma_{\mathrm{DL}}^2)$ is the additive noise.  The first term in \eqref{eq:ReceivedSigv1} is the desired signal for user~$k$ in cell~$l$ and the second term is the intra-cell interference. The remaining terms are inter-cell interference and noise.

	To decode the desired signal $s_{lk}[n]$, user~$k$ in cell~$l$ should know  $\alpha_{lk}^{lk}$
	and the average power of the remaining interference-plus-noise terms. Learning $\alpha_{lk}^{lk}$ is the most critical issue since its value changes in every coherence interval, thus an efficient DL channel estimation procedure is needed.
	One option is to spend a part of the coherence interval on transmitting DL pilots \cite{ngo2013massive}. 
	Another option is to utilize the structure created by the fact that the precoding vector is computed based on an MMSE estimate of $ \mathbf{g}_{lk}^{l} $. Although $\mathbb{E}\{ \mathbf{g}_{lk}^{l} \} =\mathbf{0}$, we have $\mathbb{E}\{ \alpha_{lk}^{lk} \} > 0$ for most precoding schemes, thus a basic estimate of $\alpha_{lk}^{lk}$ is its mean value $\mathbb{E}\{ \alpha_{lk}^{lk} \}$ \cite{Marzetta2006a}.
	The latter solution is attractive in ideal Massive MIMO systems where the channel hardening property implies that $\alpha_{lk}^{lk}$ is close to its mean value when the number of antennas is large \cite{jose2011pilot,Marzetta2006a}.
The drawback of  these solutions is the extra pilot overhead and the substantial performance reduction in the high-SNR regime, respectively. 	
	
	\section{Model-based Estimation Approach}\label{sec:modelbased}
	
	We want to estimate the realization of $\alpha_{lk}^{lk}$, for each user $k$ in a given cell $l$ in a blind manner, without transmitting explicit DL pilots.

	To this end, the user computes the sample mean of the received signal power in the current coherence interval:
	\vspace{-0.08in}
	\begin{equation} \label{eq:Xilko}
	\xi_{lk}  = \frac{\sum_{n=1}^{\tau_c - \tau_p} \left|y_{lk} [n]\right|^2 }{\tau_c - \tau_p}.
	\end{equation}
	The data signals and noise take new independent realizations for every $n$, thus we obtain the following result when the coherence interval is large.
	
	\begin{lemma}\label{Lemma:asymptotic}

		As $\tau_c \rightarrow \infty$ (for a fixed $ \tau_p $), $\xi_{lk}$ in \eqref{eq:Xilko} converges in probability as follows:
		\begin{equation} \label{eq:Xilk}
		\begin{split}
		\xi_{lk}\!\! \xrightarrow{P} \!\!\left( \!\!\eta_{lk}\left|\alpha_{lk}^{lk}\right|^2 \!\!+ \!\!\!\!\sum\limits_{\substack{k'=1,\\ k' \neq k}}^K  \!\!\eta_{lk'}\left|\alpha_{lk}^{lk'}\right|^2\!\! +\!\! \sum\limits_{\substack{l'=1,\\ l' \neq l}}^L \!\!\sum_{k'=1}^K \!\!\eta_{l'k'}\left|\alpha_{lk}^{l'k'}\right|^2\!\! + \!\sigma_{\mathrm{DL}}^2\!\! \right).
		\end{split}
		\end{equation}
	\end{lemma}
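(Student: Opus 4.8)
The plan is to condition on the channel realization and exploit the fact that, within a single coherence interval, the effective gains $\alpha_{lk}^{l'k'}$ and the precoding vectors $\mathbf{w}_{l'k'}$ are deterministic, whereas only the data symbols $s_{l'k'}[n]$ and the noise $\tilde{w}_{lk}[n]$ vary with the symbol index $n$. Under this conditioning, the summands $\left|y_{lk}[n]\right|^2$ appearing in \eqref{eq:Xilko} are independent and identically distributed across $n = 1, \ldots, \tau_c - \tau_p$, because the data symbols and the noise take new independent realizations for each $n$ and are mutually independent across users. This reduces the claim to a law-of-large-numbers statement for an i.i.d.\ sequence, where the randomness being averaged out is that of the symbols and noise, not of the channel.

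First I would compute the conditional mean $\mathbb{E}\{ \left|y_{lk}[n]\right|^2 \mid \{\alpha_{lk}^{l'k'}\} \}$. Expanding $\left|y_{lk}[n]\right|^2$ from \eqref{eq:ReceivedSigv1} yields the squared magnitude of each term together with all pairwise cross terms. Using $\mathbb{E}\{ s_{l'k'}[n] \} = 0$, $\mathbb{E}\{ \left|s_{l'k'}[n]\right|^2 \} = 1$, the mutual independence of the symbols of distinct users, and the independence of the zero-mean noise (of variance $\sigma_{\mathrm{DL}}^2$) from the data, every cross term vanishes in expectation. What survives is precisely the right-hand side of \eqref{eq:Xilk}, namely $\eta_{lk}\left|\alpha_{lk}^{lk}\right|^2 + \sum_{k' \neq k} \eta_{lk'}\left|\alpha_{lk}^{lk'}\right|^2 + \sum_{l' \neq l}\sum_{k'} \eta_{l'k'}\left|\alpha_{lk}^{l'k'}\right|^2 + \sigma_{\mathrm{DL}}^2$, which is a deterministic quantity once the channel is fixed.

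It then remains to invoke the weak law of large numbers: I would verify a mild moment condition (finite fourth moment of the symbols), which guarantees $\mathrm{Var}\{ \left|y_{lk}[n]\right|^2 \mid \{\alpha\} \} < \infty$, so that Chebyshev's inequality shows the sample average $\xi_{lk}$ converges in probability to its conditional mean as $\tau_c - \tau_p \to \infty$, i.e.\ as $\tau_c \to \infty$ for fixed $\tau_p$. The main point requiring care is lifting this \emph{conditional} convergence to the \emph{unconditional} statement of the lemma: since the limiting value is itself a finite-valued function of the random channel, the convergence-in-probability bound holds for each channel realization with a rate independent of that realization, and averaging over the channel (dominated convergence applied to the probability that the deviation exceeds $\epsilon$) yields the unconditional convergence in \eqref{eq:Xilk}. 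The finite-moment hypothesis is harmless in practice, as standard data constellations and the Gaussian codebook model possess finite higher-order moments.
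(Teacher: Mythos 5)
Your proposal is correct and follows what is clearly the intended argument: the paper itself defers the detailed proof to \cite{amin2021channel}, but its remark after the lemma (that the convergence ``refers to the randomness of the signals and noise, but is conditioned on the channel realizations'') confirms that the proof is exactly your conditioning argument --- fix the channel, observe that the $\left|y_{lk}[n]\right|^2$ are i.i.d.\ across $n$ with conditional mean equal to the right-hand side of \eqref{eq:Xilk} (all cross terms vanish by the zero-mean, mutually independent symbols and noise), and apply the weak law of large numbers via Chebyshev. Two small remarks: since the lemma is stated conditionally on the channel realization, your final ``lifting'' step to an unconditional statement is not actually required; and within that step, your claim that the Chebyshev rate is ``independent of the realization'' is false (the conditional variance of $\left|y_{lk}[n]\right|^2$ depends on the $\alpha_{lk}^{l'k'}$), though this does not damage the argument because the dominated-convergence step you invoke immediately afterwards needs only pointwise convergence and the trivial bound of $1$ on the conditional probabilities.
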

	\begin{proof}
		The detailed proof is provided in \cite{amin2021channel}.
	\end{proof}
	The first term at the right-hand side of \eqref{eq:Xilk} is the desired channel gain of user~$k$ in cell~$l$, the other terms are interference plus the noise variance. Note that the right-hand side of \eqref{eq:Xilk} is constant within a coherence interval but takes different independent realizations in different blocks. Hence, the convergence in probability in \eqref{eq:Xilk} refers to the randomness of the signals and noise, but is conditioned on the channel realizations in the considered coherence interval.
	Our goal is to utilize the asymptotic limit in \eqref{eq:Xilk} to estimate $\alpha_{lk}^{lk}$  from $\xi_{lk}$, but this is an ill-posed estimation problem since there are $LK$ unknowns: $\alpha_{lk}^{l'k'}$, $l'=1,\ldots,L$, $k'=1,\ldots,K$.
	To resolve this issue, we will make use of another asymptotic result, based on the regime where the number of users per cell is large.

	\begin{lemma} \label{lemma:asymptoicLLN}
	Suppose the users are dropped in each cell independently at random according to some common distribution for which $\alpha_{lk}^{l'k'}$ has bounded variance.
	As $\tau_c,K \to \infty$ such that $K/\tau_c \to 0$ and $\tau_p = fK$, we obtain the following asymptotic equivalence:  	
	\vspace{-0.08in}	
		\begin{equation}\label{eq:aymptotLemma}
		\begin{split}
		\frac{1}{K }\xi_{lk}  \asymp \frac{1}{K} \left(\eta_{lk}\left|\alpha_{lk}^{lk}\right|^2 + \sum\limits_{\substack{k'=1,\\ k' \neq k}}^K  \eta_{lk'} \mathbb{E} \left\{ \left|\alpha_{lk}^{lk'}\right|^2 \right\}
		\right. \\ \left.+ \sum\limits_{\substack{l'=1,\\ l' \neq l}}^L \sum\limits_{k'=1}^K \eta_{lk'}  \mathbb{E} \left\{ \left|\alpha_{lk}^{l'k'}\right|^2 \right\}
		+ \sigma_{\mathrm{DL}}^2\right).
		\end{split}
		\end{equation}
	\end{lemma}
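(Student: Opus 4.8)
The plan is to read \eqref{eq:aymptotLemma} as the composition of two independent averaging effects and to control each separately. The first is the averaging of the data symbols and noise over the $\tau_c-\tau_p$ samples, which is the mechanism behind Lemma~\ref{Lemma:asymptotic}; the second is a law of large numbers over the independent interfering users, which replaces each realized interference power $|\alpha_{lk}^{l'k'}|^2$ by its mean. Let $\mathcal{G}$ denote the channel realizations (equivalently, the values $\alpha_{lk}^{l'k'}$) in the current block, let $S_{lk}$ be the right-hand side of \eqref{eq:Xilk}, and let $S_{lk}^{\mathrm{det}}$ be the parenthesized expression on the right-hand side of \eqref{eq:aymptotLemma}, so that $S_{lk}$ and $S_{lk}^{\mathrm{det}}$ share the same desired term $\eta_{lk}|\alpha_{lk}^{lk}|^2$ and noise variance and differ only in their interference powers. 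I would then decompose
\begin{equation}
\frac{1}{K}\xi_{lk}-\frac{1}{K}S_{lk}^{\mathrm{det}}
=\underbrace{\frac{1}{K}\bigl(\xi_{lk}-S_{lk}\bigr)}_{T_1}
+\underbrace{\frac{1}{K}\bigl(S_{lk}-S_{lk}^{\mathrm{det}}\bigr)}_{T_2},
\end{equation}
and show $T_1\xrightarrow{P}0$ and $T_2\xrightarrow{P}0$.

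For $T_1$ I would condition on $\mathcal{G}$. Because the symbols are zero-mean, unit-power and independent across $n$ (and independent of the noise), one has $\mathbb{E}\{\xi_{lk}\mid\mathcal{G}\}=S_{lk}$, so $\mathbb{E}\{T_1\mid\mathcal{G}\}=0$, and since the summands $|y_{lk}[n]|^2$ are i.i.d.\ across $n$ given $\mathcal{G}$,
\begin{equation}
\mathbb{E}\{T_1^2\}=\mathbb{E}\bigl\{\mathrm{Var}(T_1\mid\mathcal{G})\bigr\}
=\frac{\mathbb{E}\bigl\{\mathrm{Var}(|y_{lk}[n]|^2\mid\mathcal{G})\bigr\}}{K^2\,(\tau_c-\tau_p)} .
\end{equation}
The conditional variance of $|y_{lk}[n]|^2$ is at most a constant multiple of $S_{lk}^2$, and the moment hypotheses keep $\mathbb{E}\{S_{lk}^2\}$ bounded in $K$. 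With $\tau_p=fK$ and $K/\tau_c\to 0$ the effective sample size $\tau_c-fK\to\infty$, so $\mathbb{E}\{T_1^2\}\to 0$ and hence $T_1\xrightarrow{P}0$.

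For $T_2$ the desired and noise terms cancel, leaving $\frac{1}{K}$ times a sum of centered interference powers,
\begin{equation}
T_2=\frac{1}{K}\!\left[\sum_{\substack{k'=1\\k'\neq k}}^{K}\eta_{lk'}\Delta_{lk}^{lk'}
+\sum_{\substack{l'=1\\l'\neq l}}^{L}\sum_{k'=1}^{K}\eta_{l'k'}\Delta_{lk}^{l'k'}\right],
\qquad
\Delta_{lk}^{l'k'}:=\bigl|\alpha_{lk}^{l'k'}\bigr|^2-\mathbb{E}\bigl\{|\alpha_{lk}^{l'k'}|^2\bigr\}.
\end{equation}
Since the users are dropped independently, the $\Delta_{lk}^{l'k'}$ are independent across $(l',k')$, with zero mean and, by the bounded-variance assumption, finite variance. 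Using $\eta_{l'k'}\in[0,1]$ with $\sum_{k'}\eta_{l'k'}\le 1$ in each cell (so $\sum_{k'}\eta_{l'k'}^2\le 1$) and $L$ fixed, the variance of the bracketed sum is $O(L)$ and therefore $\mathrm{Var}(T_2)=O(L/K^2)\to 0$. Chebyshev's inequality then gives $T_2\xrightarrow{P}0$, and combining the two bounds yields the claimed asymptotic equivalence \eqref{eq:aymptotLemma}.

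The main obstacle is that Lemma~\ref{Lemma:asymptotic} cannot be used as a black box here: it is stated for a \emph{fixed} pilot length $\tau_p$ as $\tau_c\to\infty$, whereas in this lemma $\tau_p=fK$ grows together with $K$. The remedy is to re-run its argument while tracking the dependence on $\tau_p$, as in the $T_1$ step, where the hypothesis $K/\tau_c\to 0$ is exactly what keeps $\tau_c-fK$ unbounded so that the per-block signal/noise averaging survives the simultaneous growth of $K$. A secondary point needing care is that the power coefficients $\eta_{l'k'}$ may depend on the drawn channels; treating them as deterministic (or conditioning on them) preserves the independence of the $\Delta_{lk}^{l'k'}$ required for the law-of-large-numbers step, and the bounded-variance hypothesis should be read as control of $\mathrm{Var}(|\alpha_{lk}^{l'k'}|^2)$ so that Chebyshev applies to $T_2$.
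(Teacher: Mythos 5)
Your decomposition into a time-averaging step ($T_1$) and a user-averaging step ($T_2$) is the natural one, and your handling of $T_1$ — re-running the argument behind Lemma~\ref{Lemma:asymptotic} while tracking $\tau_p=fK$ and using $K/\tau_c\to 0$ to keep $\tau_c-\tau_p\to\infty$ — is sound. (The paper itself defers its proof to the journal version \cite{amin2021channel}, so there is no in-text proof to compare against line by line.) The genuine gap is in $T_2$: the claim that ``since the users are dropped independently, the $\Delta_{lk}^{l'k'}$ are independent across $(l',k')$'' is false. By \eqref{eq:alphaVal}, $\alpha_{lk}^{l'k'}=\sqrt{\rho_{\rm dl}}\,(\mathbf{g}_{lk}^{l'})^{\rm H}\mathbf{w}_{l'k'}$ depends on \emph{two} users' randomness: the interferer's precoder $\mathbf{w}_{l'k'}$ and the reference user's own channel $\mathbf{g}_{lk}^{l'}$. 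For fixed $l'$, all $K$ interference gains share the same random vector $\mathbf{g}_{lk}^{l'}$; they are independent only \emph{conditionally} on it. Concretely, for MR precoding and non-pilot-sharing interferers, conditioning on user locations and writing $\mathbf{B}_{k'}=\mathbb{E}\{\mathbf{w}_{l'k'}\mathbf{w}_{l'k'}^{\rm H}\}$, the law of total covariance gives $\mathrm{Cov}\bigl(|\alpha_{lk}^{l'k_1'}|^2,|\alpha_{lk}^{l'k_2'}|^2\bigr)=\rho_{\rm dl}^2\,\mathrm{tr}\bigl(\mathbf{R}_{lk}^{l'}\mathbf{B}_{k_1'}\mathbf{R}_{lk}^{l'}\mathbf{B}_{k_2'}\bigr)>0$, a quantity that does not decay with $K$. (Pilot contamination additionally correlates $\mathbf{w}_{l'k}$ with $\mathbf{g}_{lk}^{l'}$ for $l'\in\mathcal{P}_l$, but that is a single term per cell and is harmless; the shared-channel correlation affects all $K$ terms and is the real obstruction.)

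Why this matters: your bound $\mathrm{Var}(T_2)=O(L/K^2)$ relies on independence to annihilate the $O(K^2)$ cross terms. Under the paper's sum constraint $\sum_{k'}\eta_{l'k'}\le 1$ a crude Cauchy–Schwarz bound gives $\mathrm{Var}(T_2)=O(L^2/K^2)$ with no independence at all — but in that reading the lemma is vacuous, since both sides of \eqref{eq:aymptotLemma} are $O(1/K)$ and the difference tends to zero trivially. The statement only has content if $\asymp$ is read multiplicatively (ratio $\to 1$, equivalently $\xi_{lk}-S_{lk}^{\mathrm{det}}\to 0$ without the $1/K$), or if per-user powers are $\Theta(1)$; in that non-trivial regime the cross-covariances above contribute $\Theta(1)$ to the variance, and in fact $\frac{1}{K}\sum_{k'}\bigl(|\alpha_{lk}^{l'k'}|^2-\mathbb{E}\{|\alpha_{lk}^{l'k'}|^2\}\bigr)$ converges to $\rho_{\rm dl}\bigl[(\mathbf{g}_{lk}^{l'})^{\rm H}\bar{\mathbf{B}}\,\mathbf{g}_{lk}^{l'}-\mathrm{tr}(\mathbf{R}_{lk}^{l'}\bar{\mathbf{B}})\bigr]$ with $\bar{\mathbf{B}}=\lim_{K\to\infty}\frac{1}{K}\sum_{k'}\mathbf{B}_{k'}$, which is a non-degenerate random variable for fixed $M$. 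The law of large numbers you need is therefore a \emph{conditional} one: condition on the reference user's channels $\{\mathbf{g}_{lk}^{l'}\}_{l'=1}^{L}$ (under which the interfering terms genuinely are independent across $k'$), conclude convergence to the conditional means $\rho_{\rm dl}(\mathbf{g}_{lk}^{l'})^{\rm H}\mathbf{B}_{k'}\mathbf{g}_{lk}^{l'}$, and then either restate the limit with those conditional expectations or supply an additional (channel-hardening-type) argument to replace them by the unconditional expectations appearing in \eqref{eq:aymptotLemma}. As written, your $T_2$ step does not establish the claimed limit.
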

	\begin{proof}  
	 The detailed proof is provided in \cite{amin2021channel}.
	\end{proof}	
	Lemma \ref{lemma:asymptoicLLN} implies that the mutual interference terms can be replaced by their mean values as $K \rightarrow \infty$ and  the mean value is computed with respect to the channel realizations for given user locations. 
This is a rigorous asymptotic result but we will utilize it as a motivation for approximating $\xi_{lk}$ for a finite number of users $K$ per cell as follows:
	\begin{equation} \label{eq:Xilkv1Orig}
	\begin{split}
	\xi_{lk}  \approx \left(\eta_{lk}\left|\alpha_{lk}^{lk}\right|^2 + \sum\limits_{\substack{k'=1,\\ k' \neq k}}^K  \eta_{lk'} \mathbb{E} \left\{ \left|\alpha_{lk}^{lk'}\right|^2 \right\}
	\right.\\\left.+  \sum\limits_{\substack{l'=1,\\ l' \neq l}}^L \sum\limits_{k'=1}^K  \eta_{l'k'} \mathbb{E} \left\{ \left|\alpha_{lk}^{l'k'}\right|^2 \right\}
	+ \sigma_{\mathrm{DL}}^2\right).
	\end{split}
	\end{equation}
	If there would be equality in \eqref{eq:Xilkv1Orig}, we can solve for $|\alpha_{lk}^{lk}|$:
	\begin{equation}
	\alpha_{lk}^{lk} \approx |\alpha_{lk}^{lk}| \approx \sqrt{\frac{ \xi_{lk}  - T_{lk}  }{\eta_{lk}}}
	\end{equation}
	where we also utilize that $\alpha_{lk}^{lk}$ is approximately positive and 
	\begin{equation}\label{eq:TLK}
	\begin{aligned}
	T_{lk} &= \!\!\sum_{\substack{k'=1,\\ k' \neq k}}^K\!\!\!  \eta_{lk'}  \mathbb{E} \left\{ \left|\alpha_{lk}^{lk'}\right|^2 \right\} + \sum_{\substack{l'=1,\\ l' \neq l}}^L \!\sum_{k'=1}^K \!\!\!\eta_{l'k'} \mathbb{E} \left\{ \left|\alpha_{lk}^{l'k'}\right|^2 \right\} + \sigma_{\mathrm{DL}}^2.
	\end{aligned}
	\end{equation}
	Based on \eqref{eq:Xilkv1Orig}, we propose the following estimator
		\vspace{-.07in}
	\begin{equation} \label{eq:alphalkhat}
	\hat{\alpha}_{lk}^{lk} = \begin{cases}
	\sqrt{\frac{ \xi_{lk}  - T_{lk}  }{\eta_{lk}}}, & \mbox{if } \xi_{lk} > \Theta_{lk} ,  \\
	\mathbb{E} \big\{ \alpha_{lk}^{lk} \big\},& \mbox{otherwise}.
	\end{cases}
	\end{equation}
	The second case utilizes the mean value as the estimate of ${\alpha}_{lk}^{lk}$ when $\xi_{lk}$ is below some threshold $ \Theta_{lk} \geq T_{lk}$ that identifies the cases when the proposed estimator is inaccurate.

	We can measure the accuracy of this estimator using the normalized MSE, defined at the user $k$ in cell $l$ as
	\begin{equation} \label{eq:NMSE}
	\text{MSE}_{lk} = \frac{\mathbb{E}\{|\hat{\alpha}_{lk}^{lk}-{\alpha}_{lk}^{lk}|^2\}}{\mathbb{E}\{|{\alpha}_{lk}^{lk}|^2\}}.
	\end{equation}

	We can compute $T_{lk}$ as follows when MR precoding is used.

	\begin{lemma} \label{Lemma:ClosedFormMRCorrelated}
	If MR precoding with $	\mathbf{w}_{lk} = \frac{\hat{\mathbf{g}}^{l}_{lk}}{\sqrt{\mathbb{E}\left\{\|\hat{\mathbf{g}}^{l}_{lk} \|^{2}\right\}}}$
	is utilized, then we can estimate ${\alpha}_{lk}^{lk}$ using \eqref{eq:alphalkhat} with
		\begin{equation} \label{eq:Tlk-derivation}
		\begin{aligned}
		&T_{lk} =   \sum\limits_{\substack{k'=1, \\k' \neq k}}^K \rho_{\rm dl}\eta_{lk'} \frac{\mathrm{Tr}\left(\mathbf{R}^{l}_{lk'}\boldsymbol{\Psi}^{-1}_{lk'}\mathbf{R}^{l}_{lk'}\mathbf{R}^{l}_{lk}\right)}{\mathrm{Tr}\left(\mathbf{R}^{l}_{lk'}\boldsymbol{\Psi}^{-1}_{lk'}\mathbf{R}^{l}_{lk'}\right)}\\
		&+ \sum\limits_{\substack{l'=1,\\ l' \neq l}}^L \sum_{k'=1}^K \rho_{\rm dl}\eta_{l'k'} \frac{\mathrm{Tr}\left(\mathbf{R}^{l'}_{l'k'}\boldsymbol{\Psi}^{-1}_{l'k'}\mathbf{R}^{l'}_{l'k'}\mathbf{R}^{l'}_{lk}\right)}{\mathrm{Tr}\left(\mathbf{R}^{l'}_{l'k'}\boldsymbol{\Psi}^{-1}_{l'k'}\mathbf{R}^{l'}_{l'k'}\right)} \\
		&+ \sum\limits_{l' \in \mathcal{P}_{l} \setminus \{l\}} \rho_{\rm dl}\eta_{l'k} \left(\frac{\hat{p}_{lk}\tau_{p} \left| \mathrm{Tr}\left(\mathbf{R}^{l'}_{lk}\boldsymbol{\Psi}^{-1}_{lk}\mathbf{R}^{l'}_{l'k} \right)\right|^{2}  }{ \mathrm{Tr}\left(\mathbf{R}^{l'}_{l'k}\boldsymbol{\Psi}^{-1}_{lk}\mathbf{R}^{l'}_{l'k}\right)} \right)  + \sigma_{\mathrm{DL}}^2.
		\end{aligned}
		\end{equation}

	\end{lemma}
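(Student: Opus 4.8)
The plan is to reduce the evaluation of $T_{lk}$ in \eqref{eq:TLK} to computing the second-order statistics $\mathbb{E}\{|\alpha_{lk}^{l'k'}|^2\}$ of the interfering effective gains under MR precoding. Substituting the stated precoder into the definition \eqref{eq:alphaVal} turns every interference term into a normalized fourth-order channel moment,
\begin{equation}
\mathbb{E}\left\{\left|\alpha_{lk}^{l'k'}\right|^2\right\} = \frac{\rho_{\rm dl}\,\mathbb{E}\left\{\left|(\mathbf{g}_{lk}^{l'})^{\rm H}\hat{\mathbf{g}}_{l'k'}^{l'}\right|^2\right\}}{\mathbb{E}\left\{\|\hat{\mathbf{g}}_{l'k'}^{l'}\|^2\right\}},
\end{equation}
i.e., a moment of the channel $\mathbf{g}_{lk}^{l'}$ from BS $l'$ to user $k$ in cell $l$ against the estimate that BS $l'$ uses for its own user $k'$. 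The denominator follows immediately from the given estimate distribution, $\mathbb{E}\{\|\hat{\mathbf{g}}_{l'k'}^{l'}\|^2\}=\tau_p\hat{p}_{l'k'}\mathrm{Tr}(\mathbf{R}_{l'k'}^{l'}\pmb{\Psi}_{l'k'}^{-1}\mathbf{R}_{l'k'}^{l'})$, which is exactly the normalization appearing in the sums of \eqref{eq:Tlk-derivation}.

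The decisive step is to classify the numerators according to pilot reuse. Since $\hat{\mathbf{g}}_{l'k'}^{l'}$ is a deterministic linear map of the processed pilot observation at BS $l'$, it is statistically independent of $\mathbf{g}_{lk}^{l'}$ unless user $k$ in cell $l$ transmits on the same pilot as user $k'$ in cell $l'$, that is, unless $k'=k$ and $l'\in\mathcal{P}_l$. For all independent pairs---every intra-cell term with $k'\neq k$ and every inter-cell term outside the contaminating set---I would use the identity $\mathbb{E}\{|\mathbf{a}^{\rm H}\mathbf{b}|^2\}=\mathrm{Tr}(\mathbf{R}_{\mathbf{a}}\mathbf{R}_{\mathbf{b}})$ valid for independent zero-mean circularly-symmetric Gaussian vectors. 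Inserting $\mathbf{R}_{\mathbf{a}}=\mathbf{R}_{lk}^{l'}$ and the estimate covariance for $\mathbf{R}_{\mathbf{b}}$, cancelling the common factor $\tau_p\hat{p}_{l'k'}$, and simplifying with the cyclic property of the trace reproduces the first two sums of \eqref{eq:Tlk-derivation} exactly.

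For the pilot-sharing terms ($k'=k$, $l'\in\mathcal{P}_l\setminus\{l\}$) the two vectors are jointly Gaussian but correlated through the shared pilot, so the independence identity fails. Here I would apply the fourth-moment (Isserlis/Wick) identity for correlated circularly-symmetric complex Gaussians, $\mathbb{E}\{|\mathbf{a}^{\rm H}\mathbf{b}|^2\}=\mathrm{Tr}(\mathbf{R}_{\mathbf{a}}\mathbf{R}_{\mathbf{b}})+|\mathrm{Tr}(\mathbb{E}\{\mathbf{b}\mathbf{a}^{\rm H}\})|^2$, in which the pseudo-covariance contribution vanishes by circular symmetry. The first summand coincides with the independent-case value already collected in the second sum, while the extra squared-trace term $|\mathrm{Tr}(\mathbb{E}\{\hat{\mathbf{g}}_{l'k}^{l'}(\mathbf{g}_{lk}^{l'})^{\rm H}\})|^2$, after normalization, yields the third sum. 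The required cross-covariance is obtained from the MMSE structure: the processed pilot at BS $l'$ has cross-correlation $\sqrt{\hat{p}_{lk}}\,\tau_p\mathbf{R}_{lk}^{l'}$ with the contaminating channel, and applying the estimator map $\sqrt{\hat{p}_{l'k}}\mathbf{R}_{l'k}^{l'}\pmb{\Psi}_{l'k}^{-1}(\cdot)$ produces a trace of the form $\tau_p\sqrt{\hat{p}_{lk}\hat{p}_{l'k}}\,\mathrm{Tr}(\mathbf{R}_{l'k}^{l'}\pmb{\Psi}_{l'k}^{-1}\mathbf{R}_{lk}^{l'})$, whose squared modulus gives the stated contamination correction.

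I expect the main obstacle to be the bookkeeping of the pilot-reuse structure rather than any isolated calculation: one must correctly single out the $(k'=k,\,l'\in\mathcal{P}_l)$ links as the only correlated ones, retain their independent-case value inside the second sum so that the third sum contributes \emph{only} the additional squared-trace term (thereby avoiding double counting), and carefully track the cross-correlation between a channel seen at BS $l'$ and the pilot that BS $l'$ actually processes. Once these are fixed, the remainder---two Gaussian fourth-moment identities, the given MMSE estimate and error statistics, and repeated use of the cyclic trace identity---is routine algebra.
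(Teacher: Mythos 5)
Your proof is correct and is essentially the argument behind the paper's result (the paper itself defers all details to \cite{amin2021channel}): substitute the MR precoder into \eqref{eq:alphaVal}, reduce each interference term in \eqref{eq:TLK} to a Gaussian fourth moment normalized by $\mathbb{E}\{\|\hat{\mathbf{g}}_{l'k'}^{l'}\|^2\}=\tau_p\hat{p}_{l'k'}\mathrm{Tr}(\mathbf{R}_{l'k'}^{l'}\pmb{\Psi}_{l'k'}^{-1}\mathbf{R}_{l'k'}^{l'})$, split the links into pilot-independent and pilot-sharing ones, and apply the Wick/Isserlis identity so that pilot-sharing links contribute their independent-case trace (absorbed into the second sum) plus the squared cross-trace contamination correction (the third sum), exactly avoiding double counting. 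One discrepancy worth noting: your derivation yields $\pmb{\Psi}_{l'k}^{-1}$ in the contamination term, i.e., the matrix of the estimating BS $l'$, whereas the lemma as printed has $\pmb{\Psi}_{lk}^{-1}$; since the precoder at BS $l'$ is built from $\hat{\mathbf{g}}_{l'k}^{l'}=\sqrt{\hat{p}_{l'k}}\,\mathbf{R}_{l'k}^{l'}\pmb{\Psi}_{l'k}^{-1}\tilde{\mathbf{y}}_{l'k}$ and the second sum of \eqref{eq:Tlk-derivation} consistently uses $\pmb{\Psi}_{l'k'}^{-1}$, the printed subscript in the third sum appears to be a typo and your version is the internally consistent one.
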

	\begin{proof}
		The detailed proof is provided in \cite{amin2021channel}
	\end{proof}

	\subsection{Ergodic SE }\label{sec:SE}

To evaluate the SE achieved when using the proposed estimator in \eqref{eq:alphalkhat}, we need to derive a new SE expression because the DL effective channel gain estimate is correlated with the data symbols (which is not supported by the conventional SE expressions).
 
 To resolve this issue, for the $n$-th data symbol, we remove $y_{lk}[n]$ from the received data and the sample average power of the signal at user~$k$ in cell~$l$ is reformulated as \cite{ngo2017no}
	\begin{equation} \label{eq:Estv2}
	\xi^{'}_{lk}[n]  = \frac{\sum_{n'=1,n' \neq n}^{\tau_c - \tau_p} \left|y_{lk} [n']\right|^2 }{\tau_c - \tau_p-1}.
	\end{equation}
	Utilizing \eqref{eq:Estv2} to estimate ${\alpha}_{lk}^{lk}$, denoted as $\bar{\alpha}_{lk}^{lk}[n]$, it is clear that  $\bar{\alpha}_{lk}^{lk}[n]$ is close to $\hat{\alpha}_{lk}^{lk}[n]$ when $\tau_c - \tau_p$ grows large. By dividing \eqref{eq:ReceivedSigv1} with $\sqrt{\eta_{lk}}\bar{\alpha}_{lk}^{lk}[n]$ to perform equalization of the effective channel gains (i.e., making the factor in front of $s_{lk}[n]$ approximately equal to one), we obtain the received signal as	

	\begin{equation}
	\begin{aligned} 
	&y'_{lk}[n] = \!\!\mathbb{E}\left\{\frac{\alpha_{lk}^{lk}}{\bar{\alpha}_{lk}^{lk}[n]}\right\} s_{lk}[n] + \left(\frac{\alpha_{lk}^{lk}}{\bar{\alpha}_{lk}^{lk}[n]} - \mathbb{E}\left\{\frac{\alpha_{lk}^{lk}}{\bar{\alpha}_{lk}^{lk}[n]}\right\}\right) s_{lk}[n] \\
	&+ \sum\limits_{\substack{k'=1,\\ k' \neq k}}^K  \sqrt{\frac{\eta_{lk'}}{\eta_{lk}}}\frac{\alpha_{lk}^{lk'}}{\bar{\alpha}_{lk}^{lk}[n]} s_{lk'}[n] 
	+ \sum\limits_{\substack{l'=1,\\ l' \neq l}}^L \sum_{k'=1}^K \sqrt{\frac{\eta_{l'k'}}{\eta_{lk}}}\frac{\alpha_{lk}^{l'k'}}{\bar{\alpha}_{lk}^{lk}[n]} s_{l'k'}[n] \\
	&+ \frac{\tilde{w}_{lk}[n]}{\sqrt{\eta_{lk}}\bar{\alpha}_{lk}^{lk}[n]}
\label{eq:normalizedReceived}
	\end{aligned}
		\end{equation}
	where the first term is the desired signal $s_{lk}[n]$ multiplied with a deterministic channel gain. For a successful equalization, the second term is small. By treating the last four terms as additive noise and applying the channel capacity bounding technique developed in \cite{jose2011pilot}, we obtain the following result.
	
\begin{lemma}\label{lemma:SE}
	A DL ergodic SE for user $k$ in cell $l$ is	
			\begin{equation} \label{eq:SpectralEfficiencyH}
			{\rm{R}}_{lk} = \left( 1 -\frac{\tau_p}{\tau_c} \right) \log_2 \left( 1 + \mathrm{SINR}_{lk} \right), \mbox{ [b/s/Hz]},
			\end{equation}
			where the effective DL signal to interference and noise ratio (SINR) is given in \eqref{eq:SINR} on the top of the next page.
		\begin{figure*}
		\begin{equation}\label{eq:SINR}
		\mathrm{SINR}_{lk}= \frac{\left|\mathbb{E}\left\{\frac{\alpha_{lk}^{lk}}{\bar{\alpha}_{lk}^{lk}[n]}\right\}\right|^2}{\mathrm{var}\left\{\frac{\alpha_{lk}^{lk}}{\bar{\alpha}_{lk}^{lk}[n]}\right\}+ \sum\limits_{\substack{k'=1,\\ k' \neq k}}^K \frac{\eta_{lk'}}{\eta_{lk}} \mathbb{E}\left\{\left|\frac{\alpha_{lk}^{lk'}}{\bar{\alpha}_{lk}^{lk}[n]}\right|^2\right\} + \sum\limits_{\substack{l'=1,\\ l' \neq l}}^L \sum\limits_{k'=1}^K \frac{\eta_{l'k'}}{\eta_{lk}}\mathbb{E}\left\{\left|\frac{\alpha_{lk}^{l'k'}}{\bar{\alpha}_{lk}^{lk}[n]}\right|^2\right\}+ \frac{\sigma_{\mathrm{DL}}^2}{\eta_{lk}}\mathbb{E} \left\{\frac{1}{\left|\bar{\alpha}_{lk}^{lk}[n]\right|^2}\right\} }.
		\end{equation} \hrule
		\end{figure*}	
	\end{lemma}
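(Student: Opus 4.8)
The plan is to apply the standard capacity lower-bounding technique for fading channels with imperfect CSI, the use-and-then-forget argument, which is precisely the method from \cite{jose2011pilot} invoked in the statement. The starting point is the equalized received signal in \eqref{eq:normalizedReceived}, whose first term is the desired symbol $s_{lk}[n]$ scaled by the \emph{deterministic} coefficient $\mathbb{E}\{\alpha_{lk}^{lk}/\bar{\alpha}_{lk}^{lk}[n]\}$; this coefficient depends only on channel statistics and is therefore known to the receiver. I would collect the remaining four terms, namely the residual self-fluctuation, the intra-cell interference, the inter-cell interference, and the equalized noise, into an effective additive noise $\nu_{lk}[n]$, and then lower-bound the mutual information $I(s_{lk}[n]; y'_{lk}[n])$ by the rate of an equivalent deterministic-gain channel in which $\nu_{lk}[n]$ is replaced by the worst-case Gaussian noise of equal variance that is uncorrelated with the input.

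The first key step is to show that $\nu_{lk}[n]$ is zero-mean and uncorrelated with the desired symbol. The enabling fact is that $\bar{\alpha}_{lk}^{lk}[n]$ is built from $\xi'_{lk}[n]$ in \eqref{eq:Estv2}, which omits the $n$-th symbol, so the random ratio $\alpha_{lk}^{lk}/\bar{\alpha}_{lk}^{lk}[n]$ is a function of the channel realizations and of $\{s_{l'k'}[n']: n' \neq n\}$ only, hence independent of $s_{lk}[n]$. Using this independence together with the zero-mean, unit-power, and mutual independence of the data symbols, I would verify that each cross-correlation with the desired term vanishes: for the self-fluctuation term this reduces to $\mathbb{E}\{\alpha_{lk}^{lk}/\bar{\alpha}_{lk}^{lk}[n] - \mathbb{E}\{\alpha_{lk}^{lk}/\bar{\alpha}_{lk}^{lk}[n]\}\} = 0$, while for every interference term it follows from $\mathbb{E}\{s_{l'k'}[n]\} = 0$ for $(l',k') \neq (l,k)$.

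The second step computes the variance of $\nu_{lk}[n]$ and matches it to the denominator of \eqref{eq:SINR}. The same independence structure makes the four contributions mutually uncorrelated, so their powers add: the self-fluctuation contributes $\mathrm{var}\{\alpha_{lk}^{lk}/\bar{\alpha}_{lk}^{lk}[n]\}$, each interfering symbol contributes $(\eta_{l'k'}/\eta_{lk})\,\mathbb{E}\{|\alpha_{lk}^{l'k'}/\bar{\alpha}_{lk}^{lk}[n]|^2\}$ after using $\mathbb{E}\{|s_{l'k'}[n]|^2\}=1$, and the equalized AWGN contributes $(\sigma_{\mathrm{DL}}^2/\eta_{lk})\,\mathbb{E}\{1/|\bar{\alpha}_{lk}^{lk}[n]|^2\}$ since $\tilde{w}_{lk}[n]$ is independent of $\bar{\alpha}_{lk}^{lk}[n]$. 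Dividing the desired power $|\mathbb{E}\{\alpha_{lk}^{lk}/\bar{\alpha}_{lk}^{lk}[n]\}|^2$ by this total yields $\mathrm{SINR}_{lk}$, the worst-case-Gaussian bound gives $\log_2(1+\mathrm{SINR}_{lk})$ per data symbol, and the fact that only $\tau_c-\tau_p$ of the $\tau_c$ symbols carry data supplies the pre-log factor $(1-\tau_p/\tau_c)$, producing \eqref{eq:SpectralEfficiencyH}.

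I expect the main obstacle to be rigorously justifying the independence of $\alpha_{lk}^{lk}/\bar{\alpha}_{lk}^{lk}[n]$ from $s_{lk}[n]$ and the consequent vanishing of all cross-correlations, because $\bar{\alpha}_{lk}^{lk}[n]$ is itself a nonlinear function, through the square-root estimator and the thresholding in \eqref{eq:alphalkhat}, of the other received symbols, and it is statistically coupled to the very channel gains appearing in the numerators of the interference terms. Care is needed to confirm that this coupling does not induce a nonzero correlation with the input symbol; the deliberate exclusion of index $n$ in \eqref{eq:Estv2} is exactly what closes this gap, and I would treat it as the technical heart of the proof.
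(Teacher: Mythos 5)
Your proposal is correct and follows essentially the same route as the paper: the paper's argument is exactly the equalization in \eqref{eq:normalizedReceived} followed by the use-and-then-forget capacity bounding technique of \cite{jose2011pilot}, treating the residual self-fluctuation, interference, and equalized noise as worst-case uncorrelated Gaussian noise, with the pre-log factor coming from the $\tau_c-\tau_p$ data symbols. Your identification of the exclusion of sample $n$ in \eqref{eq:Estv2} as the step that decouples $\bar{\alpha}_{lk}^{lk}[n]$ from the time-$n$ symbols (and hence makes all cross-correlations vanish) is precisely the reason the paper introduces $\xi'_{lk}[n]$ instead of reusing $\xi_{lk}$.
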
	

	This is an achievable SE, in other words, a lower bound on the ergodic channel capacity.

	For benchmark purposes, we will also consider the ideal case when the users have access to perfect CSI. Then, the first term in \eqref{eq:ReceivedSigv1} is the desired signal multiplied with a known channel and the remaining terms can be treated as additive noise. By applying a standard ergodic channel capacity bounding technique from \cite{redbook}, we have the following result.

	\begin{lemma} \label{lemma:SEPerfect}
		If perfect CSI is available at the user, then the DL ergodic spectral efficiency given as
		\begin{equation} \label{eq:SpectralEfficiencyPerfect}
		{\rm{R}}_{lk} = \left( 1 -\frac{\tau_p}{\tau_c} \right) \mathbb{E}\left\{\log_2 \left( 1 + \mathrm{SINR}_{lk} \right)\right\}, \mbox{ [b/s/Hz]},
		\end{equation}
		where the SINR is given as
		\begin{equation}\label{eq:SinrPerfect}
		\mathrm{SINR}_{lk}= \frac{\eta_{lk}\left|{\alpha_{lk}^{lk}}\right|^2}{ \sum\limits_{\substack{k'=1,\\ k' \neq k}}^K \eta_{lk'} \left|\alpha_{lk}^{lk'}\right|^2 + \sum\limits_{\substack{l'=1,\\ l' \neq l}}^L \sum\limits_{k'=1}^K \eta_{l'k'}\left|\alpha_{lk}^{l'k'}\right|^2+ \sigma_{\mathrm{DL}}^2 }.
		\end{equation}
	\end{lemma}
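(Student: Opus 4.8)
The plan is to establish \eqref{eq:SpectralEfficiencyPerfect} as a standard perfect-receiver-CSI capacity lower bound, obtained by conditioning on the channel realizations in the coherence block and then averaging over them. First I would fix a block and condition on all channel vectors $\mathbf{g}_{lk}^{l'}$, so that every effective gain $\alpha_{lk}^{l'k'}$ in \eqref{eq:alphaVal} becomes a known deterministic scalar; under perfect CSI the user additionally knows its own gain $\alpha_{lk}^{lk}$. In the received signal \eqref{eq:ReceivedSigv1} I would then isolate the desired term $\sqrt{\eta_{lk}}\alpha_{lk}^{lk}s_{lk}[n]$ and collect the remaining intra-cell and inter-cell contributions together with the noise $\tilde{w}_{lk}[n]$ into a single effective-noise variable. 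Because the data symbols $s_{l'k'}[n]$ are zero-mean, unit-power, and mutually independent across users and independent of the noise, this effective noise is, conditioned on the channels, zero-mean and uncorrelated with $s_{lk}[n]$, with conditional variance equal to the denominator of \eqref{eq:SinrPerfect}.

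Next I would invoke the worst-case uncorrelated additive-noise bounding technique from \cite{redbook}: for a scalar channel $y = h\,s + z$ with known gain $h$, unit-power input $s$, and uncorrelated noise $z$ of variance $N$, the conditional mutual information $I(s;y \mid h)$ is lower-bounded by $\log_2(1 + |h|^2/N)$, with equality when $z$ is circularly-symmetric complex Gaussian. Applying this with $h = \sqrt{\eta_{lk}}\,\alpha_{lk}^{lk}$ and $N$ equal to the conditional interference-plus-noise variance yields, conditioned on the channels, a guaranteed rate of $\log_2(1+\mathrm{SINR}_{lk})$ with $\mathrm{SINR}_{lk}$ as in \eqref{eq:SinrPerfect}.

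Finally, I would average this conditional rate over the channel realizations. Since the user decodes each data symbol using the true per-block gain, the resulting ergodic rate per data symbol is $\mathbb{E}\{\log_2(1+\mathrm{SINR}_{lk})\}$, where the expectation now falls \emph{outside} the logarithm. This is the crucial structural difference from \eqref{eq:SpectralEfficiencyH}: there the gain is known only imperfectly, forcing a deterministic-gain numerator and an expectation inside a single $\log$, whereas here exact knowledge of $\alpha_{lk}^{lk}$ lets the logarithm track the instantaneous SINR. Multiplying by the prelog fraction $(1-\tau_p/\tau_c)$, which accounts for the $\tau_p$ symbols spent on uplink pilots, gives \eqref{eq:SpectralEfficiencyPerfect}. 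I expect the only delicate step to be the second one: one must confirm that the aggregated effective noise is genuinely uncorrelated with the desired symbol given the channels, so that the Gaussian worst-case argument is legitimate and the expression is a valid achievable rate (a true capacity lower bound) rather than an approximation.
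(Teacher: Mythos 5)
Your proposal is correct and matches the paper's own (very brief) justification: the paper likewise conditions on the channel realizations, treats the intra-cell and inter-cell interference plus noise in \eqref{eq:ReceivedSigv1} as additive effective noise, and invokes the standard ergodic capacity bounding technique of \cite{redbook} to obtain \eqref{eq:SpectralEfficiencyPerfect} with the expectation outside the logarithm and the prelog factor $\left(1-\tau_p/\tau_c\right)$. Your write-up is simply a more detailed account of the same argument, including the worst-case uncorrelated-Gaussian-noise step that the paper leaves implicit.
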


\section{Deep-learning-based Estimation Approach }\label{sec:datadriven}

 The proposed blind DL channel estimator in \eqref{eq:alphalkhat} is model-aided, in the sense that it was developed by studying the asymptotic properties of the system model. While the estimator is expected to work well when the coherence interval is large and there are many users per cell, there is no guarantee that the estimator will work well under the circumstances that occur in practical Massive MIMO systems. For example, the number of users per cell might be small, in particular, under low-traffic hours or when the coherence interval is relatively small.  
To obtain a more practical solution, 
 we propose a deep-learning-based approach to DL channel estimation in Massive MIMO systems, where deep learning is used to "learn" an estimator in the sense of identifying a mapping between the available information at the UE and the DL effective channel gain. We tackle the mentioned limitations of the proposed model-aided blind DL channel estimator by training a fully-connected neural network for the same task. The goal is to determine under what conditions and to what extent the proposed model-aided estimator can be outperformed.
 
 The universal approximation theorem states that one can approximate any continuous function between a given input vector and the desired output vector arbitrarily well using a sufficiently large fully-connected neural network \cite{goodfellow2016deep}. However, this theorem does not provide any exact details on the neural network structure (e.g., the number of layers and neurons) or what algorithms to utilize to find the optimal approximation. This effort must be carried out for every problem at hand. Here, we utilize a fully-connected feed-forward NN to estimate $\alpha^{lk}_{lk}$ from input data available at an arbitrary user $k$ in cell $l$. As input to the NN, we consider three features: $\xi^{'}_{lk}[n]$ that is given in \eqref{eq:Estv2}, $T_{lk}$ provided in \eqref{eq:TLK}, and $\eta_{lk}\rho_{\rm dl}\beta^{l}_{lk}$ for the user $k$ in cell $l$.
 {\color{black}{Note that we can use \eqref{eq:TLK} with knowing the covariance matrices or the pilot reuse pattern in \eqref{eq:Tlk-derivation}, which are non-trivial to determine at the users' side.}} 
  The input is selected to enable the network to learn about the pathloss model, propagation environment, and mapping between the sample average power and effective channel gains. The input vector to the neural network is denoted as $\boldsymbol{\kappa}_{lk} \in \mathbb{R}^{P}$, where $P=3$ in the proposed design. The output is a scalar  $o_{lk}$ that is supposed to be equal to the absolute value  of DL effective channel gain $\alpha^{lk}_{lk}$. The NN has $I = 3$ hidden layers with given size specified in Table \ref{table:1}, to approximate the ideal non-linear mapping from $\boldsymbol{\kappa}_{lk}$ to $o_{lk}$ \cite{o2017introduction}.
 
 To design the NN, we fine-tuned some of the network's parameters, such as the number of layers, the number of nodes per hidden layer, learning rate, activation functions, etc., experimentally to find a network structure that offers good performance in terms of NMSE. The rectified linear unit (ReLU) was  selected as the successful candidate activation function of hidden layers. The detailed information about the layout is provided in Table \ref{table:1} and the other parameters settings for the deep learning algorithm are provided in Section \ref{sec:numerical-results}.

The network is trained using a set of labeled training data consisting of inputs and corresponding optimal outputs pairs i.e., defined as $\{\boldsymbol{\kappa}^{d}_{lk}, \hat{o}^{d}_{lk}\}^{D}_{d=1}$, where $D$ is the number of points in the set. For each $d$, $ \boldsymbol{\kappa}^{d}_{lk}$ is the input vector and the corresponding desired output is $ \hat{o}^{d}_{lk}$ \cite{o2017introduction}. We train the network for a \emph{typical} user so that the same trained network is applicable for all users. The training is done offline, but the NN is used by the users in a cellular system, in online mode. In addition, the trained NN should be generalizable meaning that one can use the same model to approximate the correct output not only on the training data but also on any other input data vector generated from the same distribution as the training input data. The data is generated from the simulation setup, but it is possible to obtain such data from measurements in a practical setup. The main challenge is to obtain the labels, but one feasible solution is to occasionally transmit orthogonal pilot sequences of length $\tau_{c}-\tau_{p}$, in an entire coherence interval in the DL. These pilot sequences can be reused sparsely in the network (e.g., reuse 7) so that there is essentially no pilot contamination, and the SNR will be very high after despreading, so that the true $\alpha^{lk}_{lk}$ can be estimated accurately. These sequences can also be utilized to estimate and calibrate other aspects of the system.

\begin{table}[t!]
	\centering
	\caption{Layout of the NN.} 
	\begin{center}
		\begin{tabular}{| c| c| c| c|}
			\hline
			& Neurons & Parameters& Activation function\\ 
			\hline
			Layer 1 & 32 &  256 & ReLU \\
			\hline
			Layer 2 & 64 & 2112 & ReLU \\
			\hline
			Layer 3 & 64 & 4160 & ReLU \\
			\hline
		\end{tabular}\label{table:1}
	\end{center} 
\vspace{-.2in}
\end{table}

To evaluate the SE achieved when using the deep-learning-based  approach, we can utilize a similar DL ergodic SE expression as given in Lemma \ref{lemma:SE}.

\section{Numerical Results} \label{sec:numerical-results}

	We evaluate the proposed estimators by considering a multi-cell Massive MIMO setup with $4$ square cells in a grid layout in a 500\,m $\times$ 500\,m area. We use the wrap-around technique to avoid edge effects. Each BS has $M = 64$ antennas and serves $K$ users, which are uniformly distributed in their coverage area with a minimum distance of $35\,$m and  $\tau_{c}=500$ symbols. The large-scale fading coefficients are modeled as \cite{bjornson2017massive} 
	\begin{equation}\label{eq:largeScale}
	\beta^{l}_{l'k'} \left[{\rm dB}\right] = -35 - 36.7\log_{10}\left(d^{l}_{l'k}/1\,\rm{m}\right) + F^{l}_{l'k'} ,
	\end{equation}
	where $d^{l}_{l'k'}$ is the distance from user $k'$ in cell $l'$ to BS $l$ and $F^{l}_{l'k'}$ is log-normal shadow fading with a standard deviation of $7\,$dB. The noise variance is $-94\,$dBm. We assume an equal power allocation scheme in the DL data transmission, and the uplink transmit power of the users is set to $100\,$mW. Each BS equipped with a horizontal uniform linear array with half-wavelength antenna spacing and the spatial correlation matrix of user $k$ located in cell $l'$ to the BS $l$ is modeled by the approximate Gaussian local scattering model provided in \cite[Ch.~2.6]{bjornson2017massive} with the $(m,n)$th elements given by
	\begin{equation} \label{eq:LSM}
	\left[\mathbf{R}^{l}_{l'k}\right]_{m,n}  =\beta^{l}_{l'k} e^{\pi j (m-n)\sin (\varphi^{l}_{l'k})} e^{- \frac{\sigma^2_{\varphi}}{2} (\pi (m-n)\cos (\varphi^{l}_{l'k}))^2}.
	\end{equation}
 
	In this expression, $\varphi^{l}_{l'k}$ is the nominal angle of arrival (AoA) and the multipath components are Gaussian distributed around nominal AoA with an angular standard deviation (ASD) $\sigma_{\varphi} = 7$ degree.  For the deep-learning-based approach, the entire data set consists of $D = 1000000$ input-output vector pairs for a typical user $k$ randomly located in cell $l$ for $1000$ realizations of large-scale fading and $1000$ small-scale fading. We selected $400000$ for training, $100000$ for validation, and the rest of $500000$ for the testing phase. The implementation was carried out using the Keras library in Python. In the training phase, we selected the Adam optimizer \cite{kingma2017adam} and the loss function was the mean absolute error (MAE), the learning rate was $0.01$, the batch size was $128$, and the number of epochs was $200$.
We evaluate the performance of the proposed estimators in terms of the NMSE in the training phase as well as the SE in the data transmission phase, to investigate whether an improved NMSE also results in an improved SE.

In Fig.~\ref{fig:correlatedMRMSE}, we plot the CDF of the NMSE when the median DL SNR, i.e., $\rm{SNR_{dl}}$ of a cell-edge user is $10\,$dB. We compare the two proposed approaches against two different benchmarks: the "Hardening bound" uses $\mathbb{E}\{ \alpha_{lk}^{lk} \}$ as the estimate of $\alpha_{lk}^{lk}$, \cite{sanguinetti2019toward, yang2017massive} and "$\tau_c = \infty$" assumes that the user knows the asymptotic value of $\xi_{lk}$. The hardening bound result is the rightmost which shows that both proposed approaches perform substantially better. The deep-learning-based approach provides the smallest NMSEs, particularly for the most unfortunate users. {\color{black}Due to pilot contamination and the i.i.d. fading assumption, there will always exist channel estimation errors, even in the limiting regime. The performance of ``model-aided'' and ``$\tau_c = \infty$" coincide, which shows that the model-aided solution can achieve good performance even with finite radio resources.}

Figs.~\ref{fig:SE_MR_K310} and \ref{fig:SE_MR_K1010} show the CDF of the SE per user for $K = 3$ and $K = 10$, respectively.
Fig.~\ref{fig:SE_MR_K310} shows a significant SE improvement for the model-aided approach compared to the hardening bound, which implies that the conventional hardening bound greatly underestimates the achievable SE when the channel hardening is limited, as is the case in the considered channel model with a small ASD. The deep-learning-based approach results in higher SE than the model-aided approach in the lower 40 \% of the CDF curve and comparable SE for the other 60 \%. We also show the SE obtained with perfect CSI at the user, based on Lemma \ref{lemma:SEPerfect}, and there is a significant difference. In Fig.~\ref{fig:SE_MR_K1010} the gap between the proposed approaches and perfect CSI is reduced. The estimated effective channel gain is getting closer to its asymptotic limit by increasing the number of users, resulting in a comparable performance for perfect CSI and the hardening bound. By comparing with $K = 3$, the SEs are decreasing, which shows that interference is becoming more dominant which is also affecting the result of  perfect CSI. The results of the deep-learning-based approach for $K=10$ are obtained by using the trained model for $K=3$, which indicates that the  deep-learning-based approach is robust towards changes in the number of users.

\begin{figure}[!hbt]
		\centering
		\includegraphics[width=0.85\columnwidth]{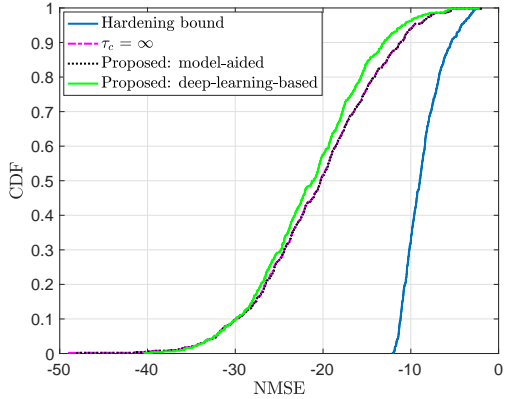}
		\caption{Comparison of NMSE for different estimation approaches.}
		\label{fig:correlatedMRMSE}
			\vspace{-.1in}
\end{figure}

	\begin{figure}[!hbt]
		\centering
		\includegraphics[width=0.85\columnwidth]{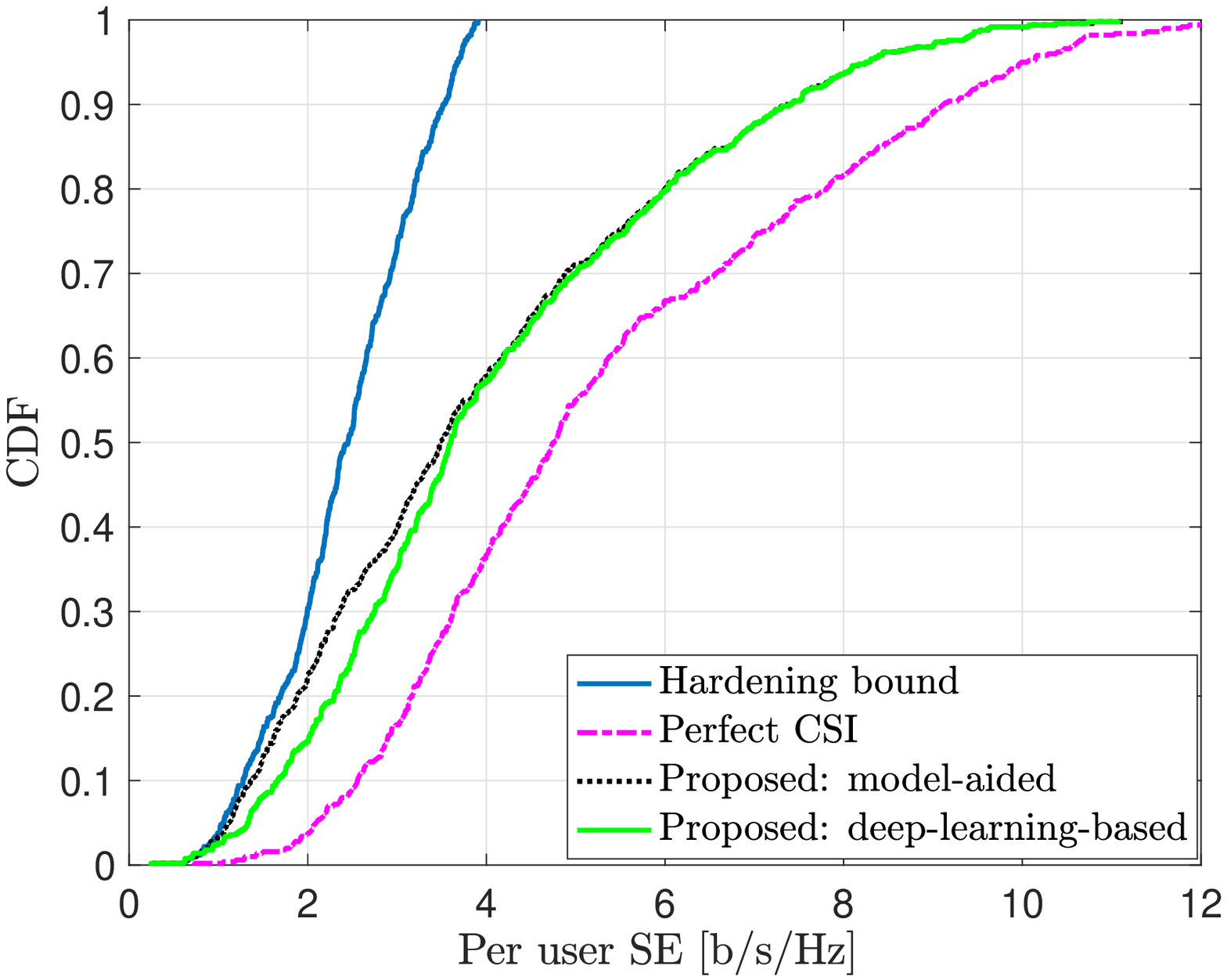}
		\caption{CDF of the SE per user with $K = 3$.}
		\label{fig:SE_MR_K310}
	\end{figure}
\begin{figure}[!hbt]
	\centering
	\includegraphics[width=0.85\columnwidth]{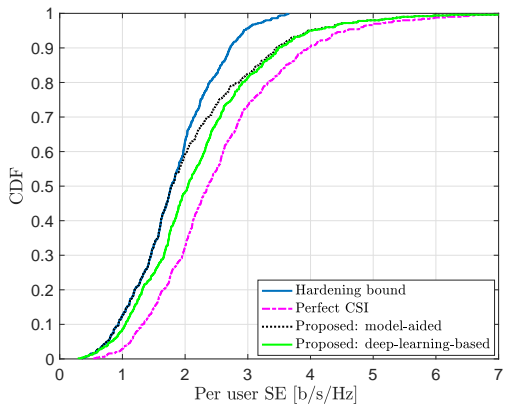}
	\caption{CDF of the SE per user with $K = 10$.}
	\label{fig:SE_MR_K1010}
\end{figure}

\section{Conclusion}\label{sec:conclusion}
This paper proposed a new model-aided approach and a new deep-learning-based approach to the estimation of the DL effective channel gains
 in multi-cell Massive MIMO systems. 
 The former approach is based on a closed-form expression that was obtained using asymptotic analysis, while the latter approach is based on supervised training  of a neural network.
 We compared the proposed approaches to the conventional approach of utilizing the mean value of the effective channel gains as the estimate, which only works well when there is a high level of channel hardening. 
The proposed approaches provide superior estimation quality (NMSE) and communication performance (SE) for channels with a low level of channel hardening, which happens in practical environments with limited scattering.

\vspace{-3mm}

	\bibliographystyle{IEEEtran}
	\bibliography{refs}
\end{document}